\pgfplotsset{compat=newest} 
\pgfplotsset{plot coordinates/math parser=false}
\newtheorem{theorem}{Theorem}
\newtheorem{definition}{Definition}
\newtheorem{remark}{Remark}
\newtheorem{lemma}{Lemma}
\newtheorem{corollary}{Corollary}
\newtheorem{example}{Example}
\let\Im\relax
\let\Re\relax
\DeclareMathOperator{\Im}{Im}
\DeclareMathOperator{\Re}{Re}
\DeclareMathOperator{\sgn}{sgn}
\title{\LARGE \bf On phase in scaled graphs}
\author{Sebastiaan van den Eijnden, Chao Chen, Koen Scheres, Thomas Chaffey, Alexander Lanzon
\thanks{}
\thanks{Sebastiaan van den Eijnden (s.j.a.m.v.d.eijnden@tue.nl) is with the Department
of Mechanical Engineering, Eindhoven University of Technology, 5600 MB Eindhoven, The Netherlands. Koen Scheres is with the Department of Electrical Engineering (ESAT-STADIUS), KU Leuven, Leuven 3001, Belgium. His work was supported by the European Research Council under the Advanced ERC Grant Agreement SpikyControl n. 101054323. Thomas Chaffey is with the School of Electrical and Computer Engineering, University of Sydney, NSW 2008, Australia.  Chao Chen and Alexander Lanzon are with the Department of Electrical and Electronic Engineering, School of Engineering, University of Manchester, Manchester M13 9PL, UK.}}
\begin{document}

\maketitle
\pagestyle{empty}
\thispagestyle{empty}

\begin{abstract}
    The scaled graph has been introduced recently as a nonlinear extension of the classical Nyquist plot for linear time-invariant systems. In this paper, we introduce a modified definition for the scaled graph, termed the signed scaled graph (SSG), in which the phase component is characterized by making use of the Hilbert transform. Whereas the original definition of the scaled graph uses \emph{unsigned} phase angles, the new definition has \emph{signed} phase angles which ensures the possibility to differentiate between phase-lead and phase-lag properties in a system. Making such distinction is important from both an analysis and a synthesis perspective, and helps in providing tighter stability estimates of feedback interconnections. We show how the proposed SSG leads to intuitive characterizations of \emph{positive real} and \emph{negative imaginary} nonlinear systems, and present various interconnection results. We showcase the effectiveness of our results through several motivating examples.
\end{abstract}


\section{Introduction}
Graphical analysis of feedback systems through Nyquist and Bode diagrams is fundamental in classical and modern control theory \cite{Skogestad, Astrom10}. Both diagrams aid in visualising the behaviour of single-input single-output (SISO) linear time-invariant (LTI) systems in the complex plane through the notions of \emph{gain} and \emph{phase}. These tools underlie robustness margins and loop-shaping design, which are essential concepts in control engineering practice. Notably, Nyquist and Bode diagrams are only defined for LTI systems, owing to the lack of a suitable definition of phase for nonlinear systems.  

To provide a bridge between graphical analysis of linear and nonlinear systems, the notion of \emph{scaled graph} (SG) has been introduced recently in \cite{Ryu22, Chaffey21, Chaffey22, Chaffey23, Eijnden24, Pates21} as a means to visualise the behaviour of nonlinear systems in the complex plane, analogous to the classical Nyquist diagram for LTI systems. In essence, the SG abstracts the input-output behaviour of a nonlinear system in terms of a collection of complex numbers, which carry both gain and phase information of the system's input-output trajectories. Gain is characterized in terms of the classical notion of the $\mathcal{L}_2$-gain \cite{vdSchaft17}, i.e., the ratio of output energy to input energy. Phase is characterised through the notion of the \emph{singular angle}, see, e.g., \cite{Wielandt67} and \cite{Chen21}, which can be seen as an extension of the angle between vectors in Euclidean space. The SG offers a new graphical method for verifying stability and robustness margins of nonlinear feedback systems, and entails classical input-output results such as the celebrated passivity, small-gain and conicity theorems \cite{Zames66, Desoer09}.      

When comparing the SG to the Nyquist diagram, however, there are essential differences in systems analysis and controller design. Although both the SG and the Nyquist diagram of an LTI system are conjugate symmetric about the real axis, in a Nyquist diagram a clear distinction can be made between the symmetric parts (related to positive and negative frequencies). That is, one can distinguish between parts on the Nyquist diagram that induce phase lag between $[-180, 0]$ degrees (negative imaginary part), and phase lead between $[0, +180]$ degrees (positive imaginary part). Making such a distinction in the SG notion is not possible. The reason for this difference is due to the use of the singular angle for characterizing phase in the SG. The singular angle is an \emph{unsigned} quantity, making it impossible to distinguish lead from lag in the SG. This introduces a certain conservatism when using SG for systems analysis. For example, it is impossible to distinguish LTI phase-lead filters from LTI phase-lag filters in the SG characterization (as we show in this paper), and thus the stabilizing mechanism of phase lead is placed on an equal footing with the potentially destabilizing mechanism of phase lag. Clearly, being able to differentiate between elements that generate phase lead or phase lag is crucial in controller design -- think, for instance, about proportional-integral-derivative (PID) filters where the sole purpose of the derivative action is to provide phase lead for meeting robust stability margins \cite{Astrom95}.  

Given the important and different roles of phase lead and phase lag in feedback systems, in this paper, we reconsider the current definition of the SG. In particular, we propose a different notion of phase in the SG that is based on using the Hilbert transform of a signal \cite{King09}. Use of the Hilbert transform for defining nonlinear system phase was pioneered in \cite{Chen23}. We combine sign information of the Hilbert transform with the singular angle to provide a signed phase, and, therefore, allow to distinguish phase lead from phase lag in SG analysis.

In line with the above, the main contributions of this paper are as follows. We propose a new definition of the scaled graph that is different from the existing definition, in the sense that phase lead can be distinguished from phase lag. We refer to this new definition as the \emph{signed scaled graph} (SSG). We provide a feedback interconnection result that reduces potential conservatism in the current SG analysis. Our new SSG allows for a natural definition of \emph{positive real} and \emph{negative imaginary} nonlinear systems, which are intimately related to phase properties. As a second contribution, we introduce these definitions from the SSG. While our characterization of positive real systems is consistent with the definitions found in the literature \cite{Khalil02}, our definition of negative imaginary systems is different. In particular, the classical definitions of negative imaginary systems involve time-derivatives or time-integrals of the input/output signals \cite{Ghallab18, Patra11, Lanzon23, Zhao22}, whereas our new definition exploits Hilbert transformed signals. We believe this definition to be more natural, as the Hilbert transform provides a pure $90$-degree phase shift to the Fourier spectrum of a signal, whereas differentiation/integration provide such phase shifts but also change the amplitude characteristics. We show that the interconnection result based on the SSG recovers classical passivity interconnection results, as well as an alternative to negative imaginary interconnection results \cite{Lanzon08}, the latter of which could not be derived from the original SG definition in \cite{Chaffey23}. The ideas put forward in this paper enrich the SG theory and provide next steps in graphical nonlinear system analysis and design.      

The remainder of this paper is structured as follows. In Section~\ref{sec:preliminaries} we provide preliminaries on SGs and examples that motivate the need for rethinking phase in the original SG definition. Section~\ref{sec:newSG} presents our first main contribution in the form of the signed SG definition, as well as an interconnection result for stability analysis. In Section~\ref{sec:NL} we introduce and discuss new definitions for positive real and negative imaginary systems, which forms our second contribution. Our results are supported by an example. Section~\ref{sec:conclusion} concludes the paper.


\textbf{Notation.} 
For $T\geq 0$ and signals $u, y: [0,T] \to \mathbb{R}^n$, let 
\begin{equation*}\label{eq:int}
\langle u,y\rangle_T = \int_{0}^T u(t)^\top y(t) \,dt \:\:\textup{ and } \:\: \|u\|_T = \sqrt{\langle u,u\rangle_T}.
\end{equation*} 
When $u,y:[0,\infty)\rightarrow \mathbb{R}^n$, we adopt the standard notation $$\langle u,y \rangle = \int_{0}^\infty u(t)^\top y(t) \,dt\:\: \text{ and } \:\:\|u\| = \sqrt{\langle u,u\rangle}. $$
The space of signals that are square-integrable over any finite time interval $[0, T]$, i.e., $\|u\|_T < \infty$, is denoted by $\mathcal{L}_{2e}$. We let $\mathcal{L}_2$ denote the space of square-integrable signals on the time axis $[0, \infty)$. Furthermore, for signals $u,y :(-\infty, \infty) \to \mathbb{R}^n$ we denote 
\begin{equation}
    \pmb{\langle} u,y\pmb{\rangle} = \int_{-\infty}^\infty u(t)^\top y(t)\,dt,
\end{equation}
and let the space of corresponding square-integrable signals be denoted by $\mathcal{L}_{2}(-\infty, \infty)$.


We denote the distance between two sets $A, B \subset \mathbb{C}$ by $\textup{dist}(A,B) = \inf_{a \in A, b \in B}|a-b|$. The real and imaginary parts of a complex number $z \in \mathbb{C}$ are denoted by $\Re\left\{z\right\}$ and $\Im\left\{z\right\}$, respectively. For a nonzero complex number $z \in \mathbb{C}$ in polar form $|z|e^{j \arg (z)}$, the magnitude is denoted by $|z|$ and the angle is denoted by $\arg(z) \in [-\pi,\pi)$. If $z = 0$ or $z = \infty$, then $\arg(z)$ is undefined. We denote the complex conjugate of a complex number $z = |z|e^{j \arg (z)}$ by $z^*$, i.e., $z^* = |z|e^{-j \arg (z)}$. The inverse of a complex number $z$ is obtained by inverting its magnitude and flip the sign of its phase, i.e., $z^\dagger = (1/|z|)e^{-j\arg (z)}$, such that $z z^\dagger = 1$.

\section{Preliminaries and motivation}\label{sec:preliminaries}
In this section, we provide preliminaries on SGs, and present some examples that highlight the need for rethinking the phase in the original SG definition.  

\subsection{Scaled graphs}
A system is a map $H : \mathcal{L}_2 \to \mathcal{L}_2$. Let $u\in \mathcal{L}_2$ denote the input and $y \in \mathcal{L}_2$ denote a corresponding output $y \in H(u)$.  We make the standing assumption that $H(0)=0$. Define for all $u, y\neq 0$ the \emph{gain} $\rho(u,y) \in (0, \infty)$ and the \emph{phase} $\theta(u,y) \in [0,\pi]$ by
\begin{equation}\label{eq:gain_phase}
    \rho(u,y) = \frac{\|y\|}{\|u\|} \quad \textup{and} \quad \theta(u,y)=\arccos \frac{\langle u,y\rangle}{\|u\|\|y\|},
\end{equation}
respectively, and $\rho(u,y) = \infty$ and $\theta(u,y) = 0$ when $u=0$. The scaled graph of $H$ is defined as \cite{Ryu22, Chaffey23}
\begin{equation}\label{eq:SG}
    \textup{SG}(H) = \left\{\rho(u,y)e^{\pm j\theta(u,y)} \mid u \in \mathcal{L}_2, \:y \in H(u)\right\}.
\end{equation}
Each input-output pair of $H$ results in a complex number whose magnitude and argument provide gain and phase information of $H$. The inverse of the SG in \eqref{eq:SG} is denoted by $\textup{SG}^\dagger (H)$ and is obtained by swapping the role of the input and output of $H$. Note that the phase $\theta(\cdot,\cdot)$ in \eqref{eq:gain_phase} takes values in $[0,\pi]$ and, therefore, is an unsigned quantity. By using both $+\theta(\cdot,\cdot)$ and $-\theta(\cdot,\cdot)$ in \eqref{eq:SG}, $\textup{SG}(H)$ is symmetric around the real axis and so is its inverse $\textup{SG}^\dagger(H)$.

In the special case that $H$ is linear and time-invariant with transfer function $H(s)$, and analytical inputs of the form $u(t) = e^{j\omega t}$, the gain and phase in \eqref{eq:gain_phase} are given by
\begin{equation}
    \rho(u,y) = |H(j\omega)| \quad \textup{and} \quad \theta(u,y) = |\arg (H(j\omega))|.
    \end{equation}
Thus, the gain and phase of the scaled graph in \eqref{eq:SG} are intimately related to the classical notions of the gain and phase of a transfer function. We refer the reader to \cite[Section~V]{Chaffey23} and \cite{Pates21} for further connections between the SG and the Nyquist diagram of an LTI system.  

The SG provides an elegant means for analysis of the feedback interconnection of two casual and bounded (i.e. open-loop stable) systems $H_1:\mathcal{L}_{2} \to \mathcal{L}_{2}$ and $H_2: \mathcal{L}_{2} \to \mathcal{L}_{2}$ as shown in Fig.~\ref{fig:FB}. Here, $w \in \mathcal{L}_{2}$ is an external signal, and $u_1, u_2, y_1, y_2\in \mathcal{L}_{2e}$ are internal signals. Before providing a feedback stability result based on SGs, we recall formal definitions of feedback well-posedness and stability.   

\begin{definition}[Well-posedness]
    Let $H_1:\mathcal{L}_{2} \to \mathcal{L}_{2}$ and $H_2: \mathcal{L}_{2} \to \mathcal{L}_{2}$. The feedback interconnection in Fig.~\ref{fig:FB} is said to be well-posed if the map $u_1 \mapsto w$ has a causal inverse on $\mathcal{L}_{2e}$. 
  \end{definition}

\begin{definition}[Feedback stability]
    Let $H_1:\mathcal{L}_{2} \to \mathcal{L}_{2}$ and $H_2: \mathcal{L}_{2} \to \mathcal{L}_{2}$. The feedback interconnection in Fig.~\ref{fig:FB} is said to be stable if it is well-posed, and inputs $w\in \mathcal{L}_2$ are mapped to outputs $u_1 \in \mathcal{L}_2$. We say it is finite-gain stable if it is stable and in addition there exists $\gamma >0$ such that $\|u_1\| \leq \gamma \| w\|$.
\end{definition}

The next result, adopted from \cite[Theorem 4]{Eijnden24}, provides a stability theorem for the interconnection in Fig.~\ref{fig:FB} that is based on graphical separation of $\textup{SG}(H_1)$ and $\textup{SG}(H_2)$ in $\mathbb{C}$.

\begin{figure}[htbt!]
\centering
\setlength{\unitlength}{1.2mm}
\begin{picture}(50,25)
\thicklines \put(0,20){\vector(1,0){8}} \put(10,20){\circle{4}}
\put(12,20){\vector(1,0){8}} \put(20,15){\framebox(10,10){$H_1$}}
\put(30,20){\vector(1,0){16}} \put(40,20){\line(0,-1){15}}
\put(40,5){\vector(-1,0){10}}  
 \put(20,0){\framebox(10,10){$H_2$}}
\put(20,5){\line(-1,0){10}} \put(10,5){\vector(0,1){13}}
\put(13,0){\makebox(5,5){$y_2$}} \put(32,20){\makebox(5,5){$y_1$}}
\put(0,20){\makebox(5,5){$w$}}  
\put(13,20){\makebox(5,5){$u_1$}} \put(32,0){\makebox(5,5){$u_2$}}
\put(10,10){\makebox(6,10){$-$}}
\end{picture} 
  \caption{Negative feedback interconnection.}
        \label{fig:FB}
\end{figure}
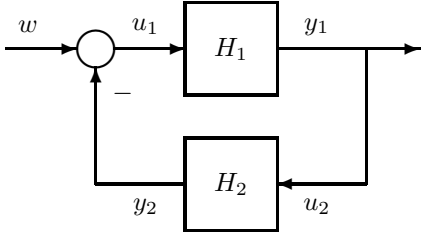    

\begin{theorem}\label{th:SG}
Consider a pair of stable systems $H_1:\mathcal{L}_2 \to \mathcal{L}_2$ and $H_2:\mathcal{L}_2 \to \mathcal{L}_2$, and suppose that the feedback interconnection in Fig.~\ref{fig:FB} between $H_1$ and $\tau H_2$ is well-posed for all $\tau \in (0,1]$. If there exists $r >0$ such that for all $\tau \in (0,1]$ 
\begin{equation}\label{eq:cond}
   \textup{dist}(\textup{SG}^\dagger(H_1), \textup{SG}(-\tau H_2))\geq  r,
\end{equation}
then the feedback interconnection is finite-gain stable. 
\end{theorem}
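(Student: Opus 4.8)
The plan is to combine a short geometric a~priori estimate with a homotopy (continuation) argument in $\tau$. Throughout I write the loop equations from Fig.~\ref{fig:FB} with $\tau H_2$ in the feedback path as $u_1 = w - y_2$, $y_1 \in H_1(u_1)$, $u_2 = y_1$, $y_2 \in \tau H_2(u_2)$, and I use that the (causal) systems $H_1,H_2$ are bounded, say with finite $\mathcal{L}_2$-gains $\gamma_1,\gamma_2$, so that $\tau H_2$ has gain at most $\gamma_2$ for every $\tau \in (0,1]$.

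\emph{Step 1 (a~priori estimate for $\mathcal{L}_2$ solutions).} First I would show that if $w\in\mathcal{L}_2$ and $(u_1,y_1,u_2,y_2)\in\mathcal{L}_2$ solves the loop for some $\tau\in(0,1]$, then $\|u_1\|\le\gamma\|w\|$ with $\gamma:=1+\gamma_2/r$ independent of $\tau$. The degenerate cases in which one of $u_1,y_1,y_2$ vanishes are immediate (using $H_i(0)=0$ one gets $\|u_1\|\le\|w\|$), so assume all are nonzero and set $z:=-y_2$, so that $z\in(-\tau H_2)(u_2)$ and $w=u_1-z$. With $\rho_1:=\|y_1\|/\|u_1\|$, $\theta_1:=\theta(u_1,y_1)$, $\rho_z:=\|z\|/\|u_2\|$, $\theta_z:=\theta(u_2,z)$, the complex numbers $A:=\rho_1^{-1}e^{j\theta_1}$ and $B:=\rho_z e^{j\theta_z}$ lie in $\textup{SG}^\dagger(H_1)$ and $\textup{SG}(-\tau H_2)$ respectively, so $|A-B|\ge r$ by \eqref{eq:cond}. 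The key fact is the triangle inequality for the angular distance $\theta(\cdot,\cdot)$ on the real Hilbert space $\mathcal{L}_2$, which gives $\theta(u_1,z)\ge|\theta(u_1,y_1)-\theta(y_1,z)|=|\theta_1-\theta_z|$, hence $\langle u_1,z\rangle\le\|u_1\|\|z\|\cos(\theta_1-\theta_z)$. Using this together with $\|z\|=\|y_2\|=\rho_z\|y_1\|=\rho_1\rho_z\|u_1\|$,
\begin{equation*}
\|w\|^2=\|u_1-z\|^2\ge\|u_1\|^2-2\|u_1\|\|z\|\cos(\theta_1-\theta_z)+\|z\|^2=\|u_1\|^2\,|1-\rho_1\rho_z e^{j(\theta_1-\theta_z)}|^2,
\end{equation*}
and a one-line computation gives $|1-\rho_1\rho_z e^{j(\theta_1-\theta_z)}|=\rho_1|A-B|\ge\rho_1 r$. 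Therefore $\|w\|\ge\rho_1 r\|u_1\|=r\|y_1\|$, i.e.\ $\|y_1\|\le\|w\|/r$; combined with $\|y_2\|\le\gamma_2\|u_2\|=\gamma_2\|y_1\|$ and $u_1=w-y_2$ this yields $\|u_1\|\le(1+\gamma_2/r)\|w\|$.

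\emph{Step 2 (continuation in $\tau$).} For $w\in\mathcal{L}_2$ and $\tau\in(0,1]$, well-posedness provides a unique causal solution in $\mathcal{L}_{2e}$, and the remaining task is to show $u_1\in\mathcal{L}_2$, for then Step~1 applies. I would argue that the set $\mathcal{T}:=\{\tau\in[0,1]:\text{every loop solution lies in }\mathcal{L}_2\text{ with }\|u_1\|\le\gamma\|w\|\}$ is nonempty ($0\in\mathcal{T}$, where the loop reads $u_1=w$), and both open and closed in $[0,1]$, hence equals $[0,1]$. Openness and closedness both rest on the same perturbation estimate: if $\tau_0\in\mathcal{T}$, the $\tau_0$-loop is finite-gain stable with a causal, gain-$\gamma$ solution map; for $\tau$ with $|\tau-\tau_0|$ small, the $\tau$-loop solution $u_1$ also solves the $\tau_0$-loop driven by $w-(\tau-\tau_0)H_2H_1u_1$, so truncating at any $T$ and invoking causality and the gains $\gamma_1,\gamma_2$ gives $\|(u_1)_T\|\le\gamma(\|w\|+|\tau-\tau_0|\gamma_1\gamma_2\|(u_1)_T\|)$; for $|\tau-\tau_0|<(2\gamma\gamma_1\gamma_2)^{-1}$ this forces $\|(u_1)_T\|\le2\gamma\|w\|$ uniformly in $T$, so $u_1\in\mathcal{L}_2$, and then Step~1 bootstraps the bound back to $\gamma\|w\|$. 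Reading off $\tau=1$ from $\mathcal{T}=[0,1]$, together with well-posedness at $\tau=1$, gives finite-gain stability.

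\emph{Main obstacle.} The geometric core (Step~1) is quick; the real work is Step~2, whose difficulty is that the scaled graph — and hence the hypothesis \eqref{eq:cond} — is formulated for $\mathcal{L}_2$ signals, whereas well-posedness a~priori delivers only $\mathcal{L}_{2e}$ solutions. This mismatch is precisely why the hypotheses are imposed uniformly over $\tau\in(0,1]$: the homotopy from the trivially stable loop at $\tau=0$ is the mechanism that transports the $\mathcal{L}_2$-estimate of Step~1 to the actual interconnection, and keeping all constants uniform in $\tau$ and in the truncation time $T$ is where care is required.
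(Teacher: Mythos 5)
Your proof is correct, but note that the paper itself does not prove Theorem~\ref{th:SG}: it is imported from \cite[Theorem 4]{Eijnden24}, so the only in-house argument to compare against is the proof of its signed analogue, Theorem~\ref{th:SGH}, in Appendix~\ref{app:pf2}. Relative to that proof, your core estimate takes a genuinely different route. The appendix converts the separation hypothesis into a disjunction (equal arguments force a gap in moduli; equal moduli force a gap in arguments) and then runs separate small-gain, large-gain and small-phase arguments, each with its own constant $c_i$ and, in the signed case, an excluded input set $\mathcal{W}$. You instead prove a single unified a priori bound: the triangle inequality for the angular metric on $\mathcal{L}_2$ gives $\langle u_1,z\rangle \le \|u_1\|\|z\|\cos(\theta_1-\theta_z)$, whence $\|w\|^2 \ge \|u_1\|^2\,|1-\rho_1\rho_z e^{j(\theta_1-\theta_z)}|^2 = \|u_1\|^2\rho_1^2|A-B|^2$, so the hypothesis $|A-B|\ge r$ of \eqref{eq:cond} enters directly and yields $\|y_1\|\le \|w\|/r$ and the explicit closed-loop gain $1+\gamma_2/r$ in one stroke. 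This is cleaner and is essentially the mechanism behind the separation theorems of \cite{Chaffey23} and \cite{Eijnden24}; it also makes clear why the paper cannot do the same for Theorem~\ref{th:SGH} (the angular triangle inequality has no signed analogue, forcing the case split and the set $\mathcal{W}$). Your Step~2 is the same homotopy device the paper uses, spelled out via an open--closed argument where the paper simply cites \cite[Thm.~3.2]{Freeman2022}. One shared imprecision rather than a gap: both your steps and the paper's own appendix quietly upgrade ``stable'' to ``finite $\mathcal{L}_2$-gain'' for $H_1,H_2$ (your $\gamma_1,\gamma_2$, the paper's $r_1,r_2$), which the statement of Theorem~\ref{th:SG} does not literally assume.
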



 The appealing aspect of Theorem~\ref{th:SG} is the fact that condition \eqref{eq:cond} can be checked graphically, and is reminiscent of a classical Nyquist stability test. Recall that due to the phase in \eqref{eq:gain_phase}, the scaled graph and its inverse are symmetric in $\mathbb{C}$ around the real axis. In the next section we will provide an example that illustrates possible conservatism in Theorem~\ref{th:SG} coming from this inherent symmetry.


\subsection{Motivating example}\label{sec:examples}

The next example highlights the implications of conjugate symmetry in scaled graphs for feedback stability analysis. 

\begin{example}[Lead and lag filters]\label{ex:0}
\normalfont Consider a first-order LTI lead filter and a first-order LTI lag filter, given by
\begin{equation}\label{eq:Hll}
    H_{\textup{lead}}(s) = \frac{s}{s+1} \quad \textup{ and } \quad H_{\textup{lag}}(s) = \frac{1}{s+1},
\end{equation}
respectively. The Nyquist diagram of both $H_{\textup{lead}}(s)$ as well as $H_{\textup{lag}}(s)$ is the circle in $\mathbb{C}$ given by $\left\{z \in \mathbb{C} \mid |z-0.5| = 0.5\right\}$. This circle precisely coincides with the scaled graphs of both $H_{\textup{lead}}(s)$ and $H_{\textup{lag}}(s)$, i.e., $\textup{SG}(H_{\textup{lead}}) = \textup{SG}(H_{\textup{lag}})$. Hence, we cannot distinguish between an LTI lead filter and an LTI lag filter based on their scaled graphs alone. 

To show the implications of $\textup{SG}(H_{\textup{lead}}) = \textup{SG}(H_{\textup{lag}})$ in feedback stability analysis, consider the LTI system
\begin{equation}\label{eq:G}
    H_1(s) = \frac{k}{(s+1)^2}
\end{equation}
with $k >0$. From the Nyquist criterion we infer that the negative feedback interconnection of $H_1$ and $H_2 = H_{\textup{lead}}$ is stable for all values $k > 0$, whereas the negative feedback interconnection of $H_1$ and $H_2 = H_{\textup{lag}}$ is stable for all $k <8$. 

The scaled graph of the system in \eqref{eq:G} is given by the region with perimeter $p(\varphi) = k\left(\cos\left(\frac{\varphi}{2}\right)\right)^2e^{-j\varphi}$, where $\varphi \in [-\pi, \pi]$ (see \cite{Chaffey22}). Since the inverse of the scaled graphs of both the lead and lag filters in \eqref{eq:Hll} are given by
\begin{equation}\label{eq:SGi}
    \textup{SG}^\dagger(H_{\textup{lead}}) =  \textup{SG}^\dagger(H_{\textup{lag}}) = \left\{z \in \mathbb{C} \mid \Re\:\{z\} = 1\right\},
\end{equation}
it follows from application of Theorem~\ref{th:SG} that the negative feedback interconnection of $H_1$ in \eqref{eq:G} with either $H_2=H_{\textup{lead}}$ or $H_2=H_{\textup{lag}}$ is stable if
$\Re\left\{p(\varphi)\right\} > -1$ for all $\varphi \in [-\pi,\pi]$, which is true for $k < 8$. Hence, when $H_2 = H_{\textup{lag}}$, Theorem~\ref{th:SG} introduces no conservatism, whereas when $H_2=H_{\textup{lead}}$, Theorem~\ref{th:SG} is conservative. This conservatism comes from the fact that in \eqref{eq:SGi} no distinction between lead and lag can be made, and, therefore, Theorem~\ref{th:SG} accounts for the worst-case situation, i.e., it accounts for phase lag as a destabilizing mechanism. $\hfill$\qed
\end{example}

One could argue that in the example discussed, stability could also be verified by evaluating $\textup{SG}(L)$ for the loop system $L=H_1H_2$ with respect to the point $-1+j0$ directly. For LTI systems, Theorem~\ref{th:SG} then coincides with the classical Nyquist test. However, in controller design and robustness analysis, there is good reason for considering the scaled graphs of $H_1$ and $H_2$ separately. Being able to differentiate between phase lead and phase lag is thus of importance in analysis and design. 

\section{Signed scaled graphs}\label{sec:newSG}
Motivated by the previous examples, in this section we introduce an alternative construction of the scaled graph. Key in this construction is the Hilbert transform of a signal, which we will discuss in more detail next. 

\subsection{The Hilbert transform}\label{sec:Hilbert}
The Hilbert transform $\mathcal{H}:\mathcal{L}_2(-\infty,\infty)\rightarrow\mathcal{L}_2(-\infty,\infty)$ of a signal $f\in \mathcal{L}_2(-\infty,\infty)$ is defined as the convolution of $f(t)$ with the function ${1}/{(\pi t)}$, that is, 
\begin{equation}\label{eq:H0}
   \hat{f}(t) = \mathcal{H}\{f\}(t) = \frac{1}{\pi t} *f(t) =\frac{1}{\pi} \int_{-\infty}^{\infty} \frac{f(\tau)}{t-\tau}\,d\tau. 
\end{equation}
The integral in \eqref{eq:H0}  is improper due to the pole at $\tau = t$, and is evaluated in the sense of the Cauchy principal value \cite{King09}. Moreover, $\hat{f}\in \mathcal{L}_2(-\infty, \infty)\supset \mathcal{L}_2$. 

Let the embedding operator $\mathcal{E}:\mathcal{L}_2\rightarrow\mathcal{L}_2(-\infty,\infty)$ be defined by 
\[ \mathcal{E}(u)(t) = \begin{cases}
                       u(t) & \text{when } t\geq 0, \\
                       0 & \text{otherwise}
                    \end{cases} \]
for any $u\in\mathcal{L}_2$. It is easy to see that $u\in\mathcal{L}_2$ implies $\mathcal{E}(u)\in\mathcal{L}_2(-\infty,\infty)$. Then, the Hilbert transform of the embedding of a signal $u\in\mathcal{L}_2$ is given by
\begin{equation}\label{eq:H}
   \hat{u}(t) = \mathcal{H}\{\mathcal{E}(u)\}(t) = \frac{1}{\pi t} *\mathcal{E}(u)(t) =\frac{1}{\pi} \int_{0}^{\infty} \frac{u(\tau)}{t-\tau}\,d\tau
\end{equation}
and $\hat{u}\in\mathcal{L}_2(-\infty,\infty)\supset \mathcal{L}_2$. 

The intuition behind the Hilbert transform can be developed by looking at the Fourier transform\footnote{Recall that the Fourier transform $\mathcal{F}:\mathcal{L}_2(-\infty,\infty)\rightarrow\mathcal{L}_2(j\mathbb{R})$ of a signal $f\in\mathcal{L}_2(-\infty,\infty)$ is given by $\tilde{f}(j\omega)=\mathcal{F}\{f\}(j\omega)=\int_{-\infty}^{\infty}f(t)e^{j\omega t}\,dt$ where $\tilde{f}\in\mathcal{L}_2(j\mathbb{R})$.}
$\mathcal{F}\{\cdot\}$ of the convolution kernel $1/(\pi t)$ in \eqref{eq:H}, which is given by $\mathcal{F}\{1/(\pi t)\} =\int_{-\infty}^{\infty}\frac{1}{\pi t}e^{j\omega t}\,dt = -j \sgn(\omega)$, where $\sgn(\cdot)$ is the sign function. As such, the Fourier transform of the Hilbert transformed signal $\hat{u}(t)$ is given by $\tilde{\hat{u}}(j\omega) = -j \sgn(\omega)\tilde{u}(j\omega)$ where 
$\tilde{\hat{u}}(j\omega)=\int_{-\infty}^{\infty}{\hat{u}}(t)e^{j\omega t}\,dt$ and $\tilde{u}(j\omega)=\int_{-\infty}^{\infty}\mathcal{E}(u)(t)e^{j\omega t}\,dt =\int_{0}^{\infty}u(t)e^{j\omega t}\,dt$. The Hilbert transform provides a pure $\pi/2$ phase shift to the frequency components in $u(t)$ and is often utilized in mathematics and signal processing to generate an analytic signal $u_a(t) = \mathcal{E}(u)(t)+j \hat{u}(t)$ that has no negative frequency components. 

For SISO LTI systems, the Hilbert transform is intimately related to the imaginary part of the system's frequency-response-function. Consider a stable SISO LTI system $y = Hu$ with transfer function $H(s)$. When evaluating the inner products of the output with the input and Hilbert transformed input, we find by application of Plancherel's theorem
\begin{align}\label{eq:uyl}
\begin{split}
    \langle u,y\rangle &= \int_0^{\infty} u(t) y(t)\, dt =\int_{-\infty}^{\infty} \mathcal{E}(u)(t) \mathcal{E}(y)(t)\, dt \\
    & =\frac{1}{2\pi} \int_{-\infty}^{\infty} H(j\omega)|\tilde{u}(j\omega)|^2 \,d\omega \\
    & = \frac{1}{\pi} \int_{0}^{\infty} \Re\left\{H(j\omega)\right\}|\tilde{u}(j\omega)|^2 \,d\omega,
    \end{split}
\end{align}
and 
\begin{align}\label{eq:uhyl}
\begin{split}
    -\pmb{\langle} \hat{u},\mathcal{E}(y)\pmb{\rangle} & =  - \int_{-\infty}^\infty \hat{u}(t) \mathcal{E}(y)(t) \,d t \\
    & =\frac{1}{2\pi} \int_{-\infty}^{\infty} -j\sgn(\omega)H(j\omega)|\tilde{u}(j\omega)|^2 \,d\omega \\
    & = \frac{1}{\pi} \int_{0}^{\infty} \Im\left\{H(j\omega)\right\}|\tilde{u}(j\omega)|^2 \,d\omega,
    \end{split}
\end{align} 
where we used the conjugate symmetry properties of $\sgn(\omega)$ and $H(j\omega)$, and the fact that $\mathcal{E}(y)(t)=0$ for all $t<0$. Hence, in essence, $\langle u,y\rangle$ provides information on the real part of $H(j\omega)$, whereas $\pmb{\langle} \hat{u},\mathcal{E}(y)\pmb{\rangle}$ provides information on the imaginary part. Combining the two quantities gives us full information in terms of gain and phase. Given these observations for LTI systems, it is natural to consider the Hilbert transform within the nonlinear input-output setting as well. 



\subsection{Signed scaled graphs through the Hilbert transform}
In view of the above, we consider an alternative construction of the scaled graph. Specifically, given any signal $u\in \mathcal{L}_2$,  for a system $H:\mathcal{L}_2 \to \mathcal{L}_2$, we define its gain similar to \eqref{eq:gain_phase}, that is, $\rho(u,y)=\|y\|/\|u\|$ for all $u,y \neq 0$, and $\rho(u,y)= \infty$ for $u = 0$. Different from \eqref{eq:gain_phase}, we now define the phase $ \varphi(u, y) \in [-\pi, \pi]$ of $H$ as

{\begin{equation}\label{eq:phase}
    \varphi(u,y) = \sgn(\pmb{\langle} \hat{u},\mathcal{E}(y)\pmb{\rangle}) \theta(u,y), 
\end{equation}
if $\pmb{\langle} \hat{u},\mathcal{E}(y)\pmb{\rangle}\neq 0$ and $\varphi(u,y) =\pm \theta(u,y) $ if $\pmb{\langle} \hat{u},\mathcal{E}(y)\pmb{\rangle}= 0$, where $\theta(u,y)$ is the (unsigned) phase given in \eqref{eq:gain_phase}.}


Different from the singular angle in \eqref{eq:gain_phase}, this notion of phase provides a \emph{signed} quantity. The \emph{signed scaled graph} of $H$ is then defined as 
\begin{equation}\label{eq:SGH}
    \textup{SSG}(H) := \left\{\rho(u,y)e^{j\varphi(u,y)} \mid u \in \mathcal{L}_2, \:y \in H(u)\right\},
\end{equation}
and its inverse $\textup{SSG}^\dagger(H)$ is obtained by swapping the role of the input and output. As such, each $z \in \textup{SSG}^\dagger(H)$ can be written as $$z(y,u) = \frac{1}{\rho(u,y)}e^{-j\varphi(u,y)},$$ which follows from the property that $\pmb{\langle} \hat{y},\mathcal{E}(u)\pmb{\rangle} = -\pmb{\langle} \hat{u},\mathcal{E}(y)\pmb{\rangle}$ \cite{Chen23}, and, therefore, by swapping the role of the input and output in $H$, the phase flips sign in line with our intuition. We furthermore denote the set $\textup{SSG}^*(H)$ as the conjugate of \eqref{eq:SGH}, i.e.,  $\textup{SSG}^*(H)=\{z^* \mid z\in \textup{SSG}(H)\}$.


Note that \eqref{eq:SGH} and its inverse are no longer symmetric around the real axis, although we could simply make them symmetric by including the complex conjugates as well. However, because \eqref{eq:phase} is a signed quantity, all relevant information is captured in \eqref{eq:SGH}. This construction is akin to the Nyquist diagram of an LTI system, where all relevant information is contained in the part for positive frequencies. It is clear that the signed scaled graph in \eqref{eq:SGH} is closely related to the original definition in \eqref{eq:SG} in the following way.

\begin{lemma}\label{prop:0}
For $H:\mathcal{L}_2 \to \mathcal{L}_2$, it holds that $\textup{SSG}(H) \subset \textup{SG}(H)$~and~$\textup{SSG}(H)\cup \textup{SSG}^*(H) = \textup{SG}(H)$. 
\end{lemma}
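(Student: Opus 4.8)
The plan is to unpack the two set definitions and match them element by element. First I would recall that, for a fixed input--output pair $(u,y)$ with $u,y \neq 0$, the gain $\rho(u,y)$ is the same quantity in both constructions, and the unsigned phase $\theta(u,y) \in [0,\pi]$ entering \eqref{eq:phase} is exactly the $\theta(u,y)$ appearing in \eqref{eq:SG}. Hence the only difference between the point $\rho e^{j\varphi(u,y)}$ contributed to $\textup{SSG}(H)$ and the pair of points $\rho e^{\pm j\theta(u,y)}$ contributed to $\textup{SG}(H)$ is the choice of sign on the angle.

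For the inclusion $\textup{SSG}(H) \subset \textup{SG}(H)$: take any $z \in \textup{SSG}(H)$, so $z = \rho(u,y)e^{j\varphi(u,y)}$ for some $u \in \mathcal{L}_2$, $y \in H(u)$. If $\pmb{\langle}\hat u, \mathcal E(y)\pmb{\rangle} \neq 0$ then $\varphi(u,y) = \pm\theta(u,y)$ depending on the sign, so $z = \rho(u,y)e^{\pm j\theta(u,y)} \in \textup{SG}(H)$ directly from \eqref{eq:SG}. If $\pmb{\langle}\hat u, \mathcal E(y)\pmb{\rangle} = 0$ then by the second clause of \eqref{eq:phase} $\varphi(u,y) = \pm\theta(u,y)$ as well, and again $z \in \textup{SG}(H)$. (The degenerate case $u = 0$, with $\rho = \infty$, should be handled consistently with however \eqref{eq:SG} treats it; since $H(0)=0$, one either excludes it or assigns the conventional value, and the matching is immediate.) Taking conjugates of this inclusion and using that $\textup{SG}(H)$ is conjugate-symmetric about the real axis (as noted after \eqref{eq:SG}, because both $+\theta$ and $-\theta$ are used) gives $\textup{SSG}^*(H) \subset \textup{SG}(H)$, hence $\textup{SSG}(H) \cup \textup{SSG}^*(H) \subset \textup{SG}(H)$.

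For the reverse inclusion $\textup{SG}(H) \subset \textup{SSG}(H) \cup \textup{SSG}^*(H)$: take any $z \in \textup{SG}(H)$, so $z = \rho(u,y)e^{\epsilon j\theta(u,y)}$ with $\epsilon \in \{+1,-1\}$ for some pair $(u,y)$. Compute $s := \sgn(\pmb{\langle}\hat u, \mathcal E(y)\pmb{\rangle})$. If $s \neq 0$, then $\rho(u,y)e^{j\varphi(u,y)} = \rho(u,y)e^{sj\theta(u,y)} \in \textup{SSG}(H)$; if moreover $\theta(u,y) \in \{0,\pi\}$ then $z$ equals this point (since $e^{j\theta}$ is real), and otherwise $z$ is either this point (if $\epsilon = s$) or its conjugate (if $\epsilon = -s$), so $z \in \textup{SSG}(H) \cup \textup{SSG}^*(H)$. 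If $s = 0$, the second clause of \eqref{eq:phase} allows $\varphi(u,y)$ to take \emph{both} values $+\theta(u,y)$ and $-\theta(u,y)$, so both $\rho e^{j\theta}$ and $\rho e^{-j\theta}$ lie in $\textup{SSG}(H)$, and in particular $z \in \textup{SSG}(H) \subset \textup{SSG}(H)\cup\textup{SSG}^*(H)$. Combining the two inclusions yields the claimed equality.

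The argument is essentially bookkeeping once the definitions are lined up; the only point requiring a little care is the reading of the ``$\pm$'' in the second clause of \eqref{eq:phase}, i.e.\ that when $\pmb{\langle}\hat u, \mathcal E(y)\pmb{\rangle} = 0$ the signed graph is declared to contain \emph{both} $\rho e^{+j\theta}$ and $\rho e^{-j\theta}$ rather than forcing an arbitrary single choice. This is exactly what makes the union in the second identity come out to the full (conjugate-symmetric) $\textup{SG}(H)$ and not a proper subset of it. A secondary minor point is keeping the treatment of the $u=0$/$\rho=\infty$ endpoint identical in both definitions so that those points match trivially.
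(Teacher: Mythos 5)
Your proof is correct: the paper omits the argument entirely (stating only that it is ``straightforward''), and your case-by-case bookkeeping on the sign of $\pmb{\langle}\hat u,\mathcal{E}(y)\pmb{\rangle}$ --- including reading the ``$\pm$'' clause of \eqref{eq:phase} as contributing both $\rho e^{+j\theta}$ and $\rho e^{-j\theta}$ --- is exactly the intended straightforward verification. Nothing further is needed.
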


The proof of this lemma is straightforward. 

\begin{remark}\label{rem:1}
    To be consistent with the original definition of the scaled graph, we take both the positive and negative singular angle as the phase in case $\pmb{\langle} \hat{u},\mathcal{E}(y)\pmb{\rangle}$ is zero, i.e., $\varphi(u,y) = \pm \theta(u,y)$. This leads to excessive points in the SSG in this case, as we cannot recover more information on the phase than is provided by the singular angle. 
\end{remark}

\subsection{An interconnection result}
In this subsection we present a stability result for the feedback interconnection in Fig.~\ref{fig:FB} based on the signed scaled graph in \eqref{eq:SGH}. Before providing the result, we define the set
\begin{equation}\label{eq:MW}
    \begin{split}
       \mathcal{W}=\{&w \in \mathcal{L}_{2} \mid w = u_1+\tau y_2, \textup{ and } \exists\;\tau \in (0,1] \text{ s.t. } \\
      & \langle u_1,u_2\rangle  +\langle u_2,\tau y_2\rangle = 0, \|u_1\| - \|\tau y_2\|=0,\\
      &\pmb{\langle} \hat{u}_1, \mathcal{E}(u_2)\pmb{\rangle} \pmb{\langle} \hat{u}_2, \mathcal{E}(\tau y_2) \pmb{\rangle} < 0\}.
          \end{split}
    \end{equation}
For those feedback systems satisfying Theorem~\ref{th:SG}, this set is automatically empty. Additionally, $\mathcal{W}$ is empty for the negative feedback interconnection of two (strictly) passive systems \cite{Khalil02}, and the negative feedback interconnection of two LTI negative imaginary systems \cite{Lanzon08}.




\begin{theorem}\label{th:SGH}
Consider a pair of finite-gain stable systems $H_1:\mathcal{L}_2 \to \mathcal{L}_2$ and $H_2:\mathcal{L}_2 \to \mathcal{L}_2$ and suppose that the feedback interconnection in Fig.~\ref{fig:FB} between $H_1$ and $\tau H_2$ is well-posed for $\tau \in (0,1]$. If there exists $r >0$ such that for all $\tau \in (0,1]$ 
\begin{equation}\label{eq:cond2}
   \textup{dist}( \textup{SSG}^\dagger(H_1), \textup{SSG}(-\tau H_2))\geq  r,
\end{equation}
then the feedback interconnection is finite-gain stable for all inputs $w \in \mathcal{L}_2 \setminus \mathcal{W}$.
\end{theorem}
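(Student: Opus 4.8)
The plan is to follow the standard homotopy/separation argument that underlies Theorem~\ref{th:SG}, but to track the extra sign information carried by the Hilbert transform so that the exceptional set $\mathcal{W}$ can be peeled off. First I would fix $w \in \mathcal{L}_2 \setminus \mathcal{W}$ and, for each $\tau \in (0,1]$, write the loop equations $u_1 = w - \tau y_2$, $y_1 = H_1 u_1$, $y_2 = H_2 u_2$, $u_2 = y_1$ for the interconnection of $H_1$ and $\tau H_2$. Since both $H_i$ are finite-gain stable and the loop is well-posed for every $\tau \in (0,1]$, the signals live in $\mathcal{L}_{2e}$ and, by a standard truncation-plus-limiting argument, it suffices to obtain an a~priori bound $\|u_1\|_T \le \gamma \|w\|_T$ uniform in $T$ and $\tau$. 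The usual device is to suppose no such bound exists; then along a sequence we can normalize and extract, for some $\tau \in (0,1]$ (using compactness of $(0,1]$'s closure and continuity of the relevant functionals in $\tau$), nonzero limiting signals $u_1, u_2, y_1, y_2 \in \mathcal{L}_2$ satisfying the loop equations \emph{with $w$ replaced by $0$}, i.e. $u_1 + \tau y_2 = 0$, together with $y_1 \in H_1(u_1)$ and $\tau y_2 \in (\tau H_2)(u_2)$, $u_2 = y_1$.

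Next I would translate the separation hypothesis \eqref{eq:cond2} into a contradiction with these limiting signals. The point $z_1 := z(y_1,u_1) = (1/\rho(u_1,y_1))e^{-j\varphi(u_1,y_1)}$ lies in $\textup{SSG}^\dagger(H_1)$ (swapping input/output of $H_1$), and the point $z_2 := \rho(u_2,\tau y_2) e^{j\varphi(u_2,-\tau y_2)}$ lies in $\textup{SSG}(-\tau H_2)$. The relation $u_1 = -\tau y_2$ forces $\rho(u_1,y_1)\,\rho(u_2,\tau y_2) $ and the phase sum to line up so that $z_1$ and $z_2$ coincide: indeed $\|u_1\| = \|\tau y_2\|$ gives $|z_1| = |z_2|$ (using $u_2 = y_1$), and the cosine-law identity $\langle u_1, u_2\rangle + \langle u_2, \tau y_2\rangle = \langle y_1, u_1 + \tau y_2\rangle = 0$ forces $\theta(u_1,y_1) = \theta(u_2,-\tau y_2)$ so the unsigned angles match; this part is exactly the argument used for Theorem~\ref{th:SG} via Lemma~\ref{prop:0}. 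The genuinely new ingredient is matching the \emph{signs}: $\arg z_1 = -\varphi(u_1,y_1) = -\sgn(\pmb{\langle}\hat{u}_1,\mathcal{E}(y_1)\pmb{\rangle})\theta$ while $\arg z_2 = \varphi(u_2,-\tau y_2) = \sgn(\pmb{\langle}\hat{u}_2,\mathcal{E}(-\tau y_2)\pmb{\rangle})\theta$, and using $\hat{u}_1 = -\tau \hat{y}_2$, $u_2 = y_1$ one checks $\pmb{\langle}\hat{u}_1,\mathcal{E}(y_1)\pmb{\rangle}$ and $\pmb{\langle}\hat{u}_2,\mathcal{E}(-\tau y_2)\pmb{\rangle}$ are related so that the two signed angles agree \emph{unless} precisely the product condition $\pmb{\langle}\hat{u}_1,\mathcal{E}(u_2)\pmb{\rangle}\pmb{\langle}\hat{u}_2,\mathcal{E}(\tau y_2)\pmb{\rangle} < 0$ holds. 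In the latter case the limiting $w=0$ would be an element of $\mathcal{W}$ — but $w=0 \in \mathcal{W}$ can be excluded (or, more carefully, the whole limiting configuration would have arisen from an input in $\mathcal{W}$, which we excluded), so this case does not occur; in every other case $z_1 = z_2$, contradicting $\textup{dist}(\textup{SSG}^\dagger(H_1),\textup{SSG}(-\tau H_2)) \ge r > 0$.

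Having ruled out the loss of the a~priori bound, the uniform estimate $\|u_1\|_T \le \gamma\|w\|_T$ follows, and letting $T \to \infty$ yields $u_1 \in \mathcal{L}_2$ with $\|u_1\| \le \gamma\|w\|$, i.e. finite-gain stability for all $w \in \mathcal{L}_2\setminus\mathcal{W}$. I would also record the boundary/degenerate cases — when $u_1 = 0$ or some signal vanishes, or when $\pmb{\langle}\hat u,\mathcal E(y)\pmb{\rangle} = 0$ so that $\varphi = \pm\theta$ (Remark~\ref{rem:1}) — since there the matched point need only lie in $\textup{SSG}$ or $\textup{SSG}^\dagger$ for \emph{one} of the two sign choices; because that choice is available in the set by construction, the distance contradiction still applies.

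The main obstacle I anticipate is the compactness/limiting step combined with the $\tau$-dependence: extracting weak limits that actually satisfy $y_1 \in H_1(u_1)$ (graph-closedness of $H_1$, $H_2$) and simultaneously pinning down a single $\tau \in (0,1]$ for which the limiting loop equation holds requires care — this is the technical heart already present in \cite{Eijnden24}, and here it is compounded by the need to ensure the Hilbert-transform inner products $\pmb{\langle}\hat u_1,\mathcal E(u_2)\pmb{\rangle}$ etc. pass to the limit (continuity of $\mathcal{H}$ and $\mathcal{E}$ on $\mathcal{L}_2$, which the excerpt provides). The secondary subtlety is the bookkeeping that shows the \emph{only} way the signed angles can fail to match is the sign-product condition defining $\mathcal{W}$, so that excluding $\mathcal{W}$ is exactly enough and not more than enough.
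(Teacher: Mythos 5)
Your overall strategy departs from the paper's, and the step on which it rests does not go through. You propose a contradiction argument: assume the uniform bound $\|u_1\|_T\le\gamma\|w\|_T$ fails, normalize along a sequence, and extract nonzero limiting signals satisfying the homogeneous loop equation $u_1+\tau y_2=0$ together with $y_1\in H_1(u_1)$, $y_2\in H_2(u_2)$, so that exact coincidence $z_1=z_2$ contradicts \eqref{eq:cond2}. For general nonlinear $H_1,H_2$ this extraction is not available: rescaling an input--output pair by $1/\|u_{1,n}\|$ does not produce an input--output pair (the graphs are not cones), bounded sequences in $\mathcal{L}_2$ only admit \emph{weakly} convergent subsequences, and weak limits preserve neither the graph relations $y_i\in H_i(u_i)$, nor norms, nor the inner products $\langle u,y\rangle$ and $\pmb{\langle}\hat u,\mathcal{E}(y)\pmb{\rangle}$ that define the SSG points. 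Consequently the "limiting configuration" need not exist, and your argument never uses the quantitative margin $r>0$ -- you only contradict exact intersection in an idealized limit. A second problem is the role of $\mathcal{W}$: in the paper, $\mathcal{W}$ is defined through the \emph{actual} internal signals generated by the \emph{actual} input $w$, so excluding $w\in\mathcal{W}$ directly rules out the bad configuration along the trajectory under study; in your sketch the exclusion is invoked for a hypothetical limiting configuration with $w=0$, which is not licensed by the hypothesis $w\notin\mathcal{W}$.

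The paper's proof avoids all of this by arguing directly and quantitatively on the actual closed-loop trajectories. Separation by $r$ implies, for each pair $z_1\in\textup{SSG}(H_1)$, $z_2\in\textup{SSG}^\dagger(-\tau H_2)$, either a gain gap ($\arg z_1=\arg z_2\Rightarrow||z_1|-|z_2||\ge\epsilon$) or a phase gap ($|z_1|=|z_2|\Rightarrow|\arg z_1-\arg z_2|\ge\epsilon$). The trajectories are then partitioned into small-gain, large-gain and small-phase pairs: in the first two cases explicit bounds $\|y\|\le c_i\|w\|$ follow from $u_1=w-\tau y_2$, $u_2=y_1$ and boundedness of $H_1,H_2$; in the small-phase case the identity $\|y_1\|\|\tau y_2\|=\|u_1\|\|u_2\|$ (from $|z_1|=|z_2|$) collapses the angle inequality to $\langle w,y_1\rangle\ge\delta\|u_1\|\|y_1\|$, and Cauchy--Schwarz gives $\|w\|\ge\delta\|u_1\|$. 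The only remaining case, $z_1=z_2^*$ (equal magnitude and unsigned angle, opposite signs of the Hilbert inner products), is precisely the configuration excluded by $w\in\mathcal{L}_2\setminus\mathcal{W}$. The uniform constant $c_0=\max_i c_i$, independent of $\tau$, combined with well-posedness then yields finite-gain stability via the homotopy theorem of \cite{Freeman2022}. If you want to salvage your route, you would have to supply graph-closedness/compactness hypotheses that the theorem does not assume; the quantitative case analysis is what makes the stated hypotheses sufficient.
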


The proof of Theorem~\ref{th:SGH} can be found in Appendix~\ref{app:pf2} and is inspired by the idea of partitioning input-output trajectories in pairs leading up to small-gain, large-gain and small-phase pairs \cite{ChenBLX}. Theorem~\ref{th:SGH} still holds if  condition~\eqref{eq:cond2} is replaced by $  \textup{dist}( \textup{SSG}(H_1), \textup{SSG}^\dagger(-\tau H_2))\geq  r$.

Theorem~\ref{th:SGH} is similar to Theorem~\ref{th:SG} in the sense that both results are based on topological graph separation. Different from Theorem~\ref{th:SG}, however,  Theorem~\ref{th:SGH} exploits the signed scaled graph in \eqref{eq:SGH} which is not necessarily symmetric in $\mathbb{C}$ around the real axis. The latter may offer advantages in reducing conservatism in feedback stability analysis. 
{The restriction on the set of allowed inputs $w$ is a technical condition required for the proof of Theorem~\ref{th:SGH}, which we expect can be removed in future work.}

In the remaining part of this section we revisit Example~\ref{ex:0} from Section~\ref{sec:examples}, now in light of the signed scaled graph in \eqref{eq:SGH} and Theorem~\ref{th:SGH}.

\begin{example}[Lead and lag filters revisited]\label{ex:3rev}
    \normalfont Consider again the LTI lead and LTI lag filters given in \eqref{eq:Hll}. For all $\omega \in [0,\infty)$, the real parts of these filters satisfy $\textup{Re}\left\{H_{\textup{lead}}(j\omega)\right\} \geq 0$ and $\textup{Re}\left\{H_{\textup{lag}}(j\omega)\right\} \geq 0$, and their imaginary parts satisfy
\begin{align}
    &\textup{Im}\left\{H_{\textup{lead}}(j\omega)\right\} = {\omega}/{(\omega^2+1)} \geq 0, 
    \end{align}
    and $\textup{Im}\left\{H_{\textup{lag}}(j\omega)\right\} = -\textup{Im}\left\{H_{\textup{lead}}(j\omega)\right\} \leq 0$ for all $\omega \in [0,\infty]$. From this, we find that
    $$-\textup{SSG}^\dagger(H_{\textup{lead}}) \subset \left\{z \in \mathbb{C} \mid \textup{Re}\left\{z\right\} = -1, \textup{Im}\left\{z \right\} \geq 0\right\},$$
    and $$-\textup{SSG}^\dagger(H_{\textup{lag}})\subset \left\{z \in \mathbb{C} \mid \textup{Re}\left\{z\right\} = -1, \textup{Im}\left\{z \right\} \leq 0\right\}.$$ 
    It is important to note that from the frequency response functions of these systems we find that $\pmb{\langle} \hat{u},\mathcal{E}(y)\pmb{\rangle} = 0$ is true only for inputs\footnote{Note that technically speaking these inputs do not belong to $\mathcal{L}_2$, and can, therefore, be excluded.} with $\omega = 0$. In turn, this implies $\textup{Re}\left\{H_{\textup{lead/lag}}(j\omega)\right\} = |H_{\textup{lead/lag}}(j\omega)|$ and, therefore, $\langle u,y\rangle/\|u\|^2 = \|y\|/\|u\|$. Thus, in this case, those points in the signed scaled graph where $\pmb{\langle} \hat{u},\mathcal{E}(y)\pmb{\rangle} = 0$ collapse to points on the real axis. 
Consider also the plant $H_1$ in \eqref{eq:G} and note that $\textup{SSG}(H) \subset \textup{SG}(H)$. Similar as before, $\pmb{\langle} \hat{u},\mathcal{E}(y)\pmb{\rangle} = 0$ is true for $\omega = 0$ and excessive points collapse to points on the real axis. Since $\textup{SG}(H_1)$ is completely bounded by the Nyquist diagram of $H_1$, which belongs to the third and fourth quadrant in the complex plane, it readily follows that $\textup{SSG}(H_1)$ and $-\textup{SSG}^\dagger(\tau H_{\textup{lead}})$ do not intersect for any $\tau \in (0,1]$. By Theorem~\ref{th:SGH}, the negative feedback interconnection of $H_1$ and $H_2 = H_{\textup{lead}}$ is stable for all values of $k >0$. Moreover, similar as in the earlier example, $-\textup{SSG}^\dagger(\tau H_{\textup{lag}})$ does not intersect $\textup{SSG}(H_1)$ for any $\tau \in (0,1]$ provided $k < 8$. Hence, we recover the same result as in the Nyquist stability test. Finally, it can be easily shown via the closed-loop mapping that $\mathcal{W} \neq \mathcal{L}_2$.
$\hfill$\qed
\end{example}

As a final result in this section, we show that Theorem~\ref{th:SGH} is never more conservative than Theorem~\ref{th:SG}.

\begin{theorem}
    If condition \eqref{eq:cond} in Theorem~\ref{th:SG} is satisfied for all $\tau \in (0,1]$, then condition \eqref{eq:cond2} in Theorem~\ref{th:SGH} is satisfied for all $\tau \in (0,1]$. The converse is not true in general.
\end{theorem}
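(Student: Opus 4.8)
The plan is to obtain the implication from a short set‑containment argument resting on Lemma~\ref{prop:0}, and to refute the converse with the pair of systems already analysed in Examples~\ref{ex:0} and~\ref{ex:3rev}.

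For the implication, the key observations are the elementary identities $\textup{SG}^\dagger(H)=\{z^\dagger\mid z\in\textup{SG}(H)\}$ and $\textup{SSG}^\dagger(H)=\{z^\dagger\mid z\in\textup{SSG}(H)\}$. The first follows from the definition of the inverse SG (swapping input and output) together with the symmetry of the inner product (so $\theta(y,u)=\theta(u,y)$ and $\rho(y,u)=1/\rho(u,y)$) and the fact that $\textup{SG}(H)$ already carries both signs $\pm\theta$; the second is exactly the representation $z(y,u)=\frac{1}{\rho(u,y)}e^{-j\varphi(u,y)}$ recorded after~\eqref{eq:SGH}, which uses $\pmb{\langle}\hat{y},\mathcal{E}(u)\pmb{\rangle}=-\pmb{\langle}\hat{u},\mathcal{E}(y)\pmb{\rangle}$. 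Applying $z\mapsto z^\dagger$ to the inclusion $\textup{SSG}(H_1)\subset\textup{SG}(H_1)$ of Lemma~\ref{prop:0} yields $\textup{SSG}^\dagger(H_1)\subset\textup{SG}^\dagger(H_1)$, and Lemma~\ref{prop:0} applied to $-\tau H_2$ yields $\textup{SSG}(-\tau H_2)\subset\textup{SG}(-\tau H_2)$ for each $\tau\in(0,1]$. Since $\textup{dist}(A',B')\ge\textup{dist}(A,B)$ whenever $A'\subset A$ and $B'\subset B$ (the infimum being taken over a smaller set), condition~\eqref{eq:cond} gives, for all $\tau\in(0,1]$,
\[
\textup{dist}(\textup{SSG}^\dagger(H_1),\textup{SSG}(-\tau H_2))\;\ge\;\textup{dist}(\textup{SG}^\dagger(H_1),\textup{SG}(-\tau H_2))\;\ge\;r,
\]
which is exactly~\eqref{eq:cond2}. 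This direction is essentially immediate.

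For the converse I would reuse the pair of Example~\ref{ex:0}: $H_1(s)=k/(s+1)^2$ with $k$ above the threshold there (say $k=9$) and $H_2=H_{\textup{lead}}$. By the computation in Example~\ref{ex:0}, for such $k$ there is a $\tau\in(0,1]$ for which $\textup{SG}^\dagger(H_1)$ meets $\textup{SG}(-\tau H_2)$, so no $r>0$ exists and \eqref{eq:cond} fails. On the other hand, a computation along the lines of Example~\ref{ex:3rev} shows that $\textup{SSG}^\dagger(H_1)$ and $\textup{SSG}(-\tau H_2)$ lie in complementary closed half‑planes with disjoint traces on the real axis — the trace of $\textup{SSG}^\dagger(H_1)$ lying in $[1/k,\infty)$ because it is the image under $z\mapsto z^\dagger$ of a subset of $\textup{SG}(H_1)$, whose closure meets $\mathbb{R}$ only in $[0,k]$, and the trace of $\textup{SSG}(-\tau H_2)$ in $[-1,0]$ — and that $\textup{SSG}(-\tau H_2)$ is uniformly bounded in $\tau$ because $\|{-\tau H_2}\|\le\|H_2\|$. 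Consequently the closure of $\bigcup_{\tau\in(0,1]}\textup{SSG}(-\tau H_2)$ is compact and disjoint from the closed set $\overline{\textup{SSG}^\dagger(H_1)}$, so the (positive) distance between these two sets serves as a single $r>0$ valid for all $\tau$; hence \eqref{eq:cond2} holds although \eqref{eq:cond} fails, so the converse is false.

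I expect the step that needs genuine care to be this last compactness argument, i.e.\ upgrading the qualitative disjointness of Example~\ref{ex:3rev} to the quantitative ``$\exists\,r>0$'' separation of \eqref{eq:cond2}: one must check that the two families stay in complementary half‑planes with disjoint real‑axis traces and that the $\tau$‑parametrised family is uniformly bounded. Everything else is routine — a single set inclusion from Lemma~\ref{prop:0} for the implication, and a direct appeal to Example~\ref{ex:0} for the failure of the $\textup{SG}$‑separation.
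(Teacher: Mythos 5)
Your proof is correct and follows essentially the same route as the paper: the forward implication is Lemma~\ref{prop:0} (carried through the $\dagger$ operation) together with the monotonicity of $\textup{dist}$ under set inclusion, and the converse is refuted by the lead-filter example of Examples~\ref{ex:0} and~\ref{ex:3rev}. Your compactness argument for extracting a uniform $r>0$ in the counterexample is in fact more careful than the paper's one-line appeal to Example~\ref{ex:3rev}; the only ingredient of the paper's proof you omit is the observation that $\mathcal{W}=\emptyset$ under \eqref{eq:cond}, which is not needed for the stated implication between \eqref{eq:cond} and \eqref{eq:cond2} but underlies the surrounding claim that Theorem~\ref{th:SGH} is never more conservative than Theorem~\ref{th:SG}.
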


\begin{proof}
    This is a direct result of Lemma~\ref{prop:0} and the fact that $\mathcal{W}=\emptyset$ under condition \eqref{eq:cond}. The final statement follows from Example~\ref{ex:3rev}. 
\end{proof}

\section{nonlinear system characterizations}\label{sec:NL}
Classical notions of \emph{passive} and \emph{negative imaginary} systems are closely related to phase properties. Since the signed scaled graph in \eqref{eq:SGH} provides signed phase information, it is natural to consider this definition in the context of passive and negative imaginary nonlinear systems.  

\subsection{Passive systems}
In classical literature, passive systems are typically thought of having their phase contained in $[-\pi/2, \pi/2]$. Considering the definition of nonlinear system phase in \eqref{eq:phase}, this naturally results in the following definition.

\begin{definition}[Passive systems]\label{def:PR}
    A stable system $H : \mathcal{L}_2 \to \mathcal{L}_2$ is called input strictly passive if for all $u \in \mathcal{L}_2$ and all $y = H(u) \in \mathcal{L}_2$, there exists $\varepsilon >0$ such that
    \begin{equation}\label{eq:PR}
        \langle u,y\rangle \geq \varepsilon \|u\|^2.
    \end{equation}
    When $\varepsilon =0$ we call the system passive.
\end{definition}
Definition~\ref{def:PR} is consistent with the existing definitions of passive systems in the literature, see, for instance, \cite{vdSchaft17, Khalil02}. Clearly, the SSG of a (strictly) passive system belongs to the (open) right-half complex plane. The following result shows that the classical passivity theorem is a special case of Theorem~\ref{th:SGH}.

\begin{corollary}
The negative feedback interconnection of an input strictly passive system $H_1:\mathcal{L}_2 \to \mathcal{L}_2$ and a passive system $H_2:\mathcal{L}_2 \to \mathcal{L}_2$ is finite-gain stable for all $w\in \mathcal{L}_2$. 
\end{corollary}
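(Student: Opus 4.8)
The plan is to apply Theorem~\ref{th:SGH}. The two facts that have to be checked are a uniform graphical separation bound and the emptiness of the exceptional set $\mathcal{W}$ in \eqref{eq:MW}; once $\mathcal{W}=\emptyset$, the conclusion ``finite-gain stable for all $w\in\mathcal{L}_2\setminus\mathcal{W}$'' reads ``for all $w\in\mathcal{L}_2$'', which is what the corollary claims. I would first record that $H_1$ and $H_2$ are finite-gain stable and that the feedback interconnections of $H_1$ with $\tau H_2$, $\tau\in(0,1]$, in Fig.~\ref{fig:FB} are well-posed (standard for passive loops; it may also simply be taken as a standing hypothesis), so that Theorem~\ref{th:SGH} is applicable.

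For the separation, I would use the equivalent version of \eqref{eq:cond2} noted just after Theorem~\ref{th:SGH}, namely $\textup{dist}(\textup{SSG}(H_1),\textup{SSG}^\dagger(-\tau H_2))\geq r$, as it sidesteps any use of the gain of $H_1$. For every nonzero $u$ with $y=H_1(u)$, input strict passivity gives $\langle u,y\rangle\geq\varepsilon\|u\|^2>0$ (hence $y\neq0$), and since $|\varphi(u,y)|=\theta(u,y)$ the real part of the associated point of $\textup{SSG}(H_1)$ equals $\rho(u,y)\cos\theta(u,y)=\langle u,y\rangle/\|u\|^2\geq\varepsilon$; thus $\textup{SSG}(H_1)\subset\{z\mid\Re\{z\}\geq\varepsilon\}$, the input $u=0$ contributing only the point at infinity. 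Dually, for $y=H_2(u)$ a finite point of $\textup{SSG}^\dagger(-\tau H_2)$ has real part $\langle u,-\tau y\rangle/\|\tau y\|^2=-\langle u,H_2(u)\rangle/(\tau\|H_2(u)\|^2)\leq0$ by passivity of $H_2$ and $\tau>0$, so $\textup{SSG}^\dagger(-\tau H_2)\subset\{z\mid\Re\{z\}\leq0\}$. These half-planes are separated by $\varepsilon$, so the condition holds with $r=\varepsilon$ for all $\tau\in(0,1]$. (Using \eqref{eq:cond2} verbatim instead, one gets $\textup{SSG}^\dagger(H_1)\subset\{z\mid\Re\{z\}\geq\varepsilon/\gamma_1^2\}$, with $\gamma_1$ the gain of $H_1$, and $\textup{SSG}(-\tau H_2)\subset\{z\mid\Re\{z\}\leq0\}$.)

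For $\mathcal{W}=\emptyset$ I would argue by contradiction. If $w\in\mathcal{W}$, there exist $\tau\in(0,1]$ and closed-loop signals $u_1$, $u_2=H_1(u_1)$, $y_2=H_2(u_2)$ satisfying the three defining conditions of \eqref{eq:MW}. The first, $\langle u_1,u_2\rangle+\langle u_2,\tau y_2\rangle=0$, together with $\langle u_1,u_2\rangle=\langle u_1,H_1(u_1)\rangle\geq\varepsilon\|u_1\|^2$ and $\langle u_2,\tau y_2\rangle=\tau\langle u_2,H_2(u_2)\rangle\geq0$, forces $u_1=0$; then $u_2=H_1(0)=0$ and $y_2=H_2(0)=0$ by the standing assumption $H(0)=0$, so $\pmb{\langle}\hat{u}_1,\mathcal{E}(u_2)\pmb{\rangle}\,\pmb{\langle}\hat{u}_2,\mathcal{E}(\tau y_2)\pmb{\rangle}=0$, contradicting the third (strict) condition. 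Hence $\mathcal{W}=\emptyset$, and Theorem~\ref{th:SGH} yields finite-gain stability for every $w\in\mathcal{L}_2$.

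Each step is individually short, so the point that needs the most care is the bookkeeping around $\mathcal{W}$: one must read \eqref{eq:MW} as three conjoined conditions with the last one a strict inequality, and notice that strict passivity of $H_1$ excludes the orthogonality condition except when $u_1=0$, at which point all closed-loop signals vanish and the strict inequality cannot hold. The only genuinely external ingredient is the well-posedness hypothesis of Theorem~\ref{th:SGH}; it is standard for passive feedback loops, but a fully self-contained argument would need to invoke or establish it.
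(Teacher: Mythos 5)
Your proposal is correct and follows essentially the same route as the paper's proof: half-plane containment of $\textup{SSG}(H_1)$ and $\textup{SSG}^\dagger(-\tau H_2)$ with gap $\varepsilon$ from (strict) passivity, plus emptiness of $\mathcal{W}$ because $\langle u_1,u_2\rangle+\langle u_2,\tau y_2\rangle\geq\varepsilon\|u_1\|^2$ forces $u_1=0$. Your treatment is in fact slightly more complete than the paper's, which disposes of the $u_1=0$ case implicitly, whereas you note that all loop signals then vanish so the strict inequality in \eqref{eq:MW} cannot hold.
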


\begin{proof}
First note that for $z \in \textup{SSG}(H)$ we have $\textup{Re}\left\{z\right\} = \langle u,y\rangle/\|u\|^2$. From Definition~\ref{def:PR} it follows that $$\textup{SSG}(H_1) \subset \left\{z \in \mathbb{C} \mid \textup{Re}\left\{z \right\}\geq \varepsilon>0\right\}$$ and $$-\tau \textup{SSG}^\dagger(H_2) \subset \left\{z \in \mathbb{C} \mid \textup{Re}\left\{z \right\} \leq 0\right\},$$ where the minus sign results form the negative feedback interconnection. Hence, the graph separation condition \eqref{eq:cond2} in Theorem~\ref{th:SGH} is automatically satisfied. Besides, since $H_1$ is strictly passive and $H_2$ are passive, the situation $\langle u_1,y_1\rangle = -\langle u_2,y_2\rangle$ cannot happen for nontrivial $u_1\neq 0$, and the set $\mathcal{W}$ is empty.
\end{proof}

\subsection{Negative imaginary systems}
In a similar spirit as passive systems, negative imaginary systems are usually understood to have their phase contained in $[-\pi,0]$. Based on this understanding, we introduce an alternative definition of nonlinear negative imaginary systems.
\begin{definition}[SSG-Negative imaginary systems]\label{def:NI}
     A stable system $H : \mathcal{L}_2 \to \mathcal{L}_2$ is said to be SSG-negative imaginary if for all $u \in \mathcal{L}_2$ and $y = H(u) \in \mathcal{L}_2$, there exists $\varepsilon >0$ such that
      \begin{equation}\label{eq:niuy}
        \pmb{\langle} \hat{u},\mathcal{E}(y)\pmb{\rangle} \geq \varepsilon(\|u\|\|y\| - |\langle u,y\rangle|).
    \end{equation}      
\end{definition}
Definition~\ref{def:NI} is inspired by our definition of the SSG. That is, condition \eqref{eq:niuy} implies that all points $z \in \textup{SSG}(H)$ belong to $\{z\in\mathbb{C}\;|\;\text{Im}(z)\leq 0\}$. Indeed, from the Cauchy-Schwarz inequality, the right-hand side of \eqref{eq:niuy} is non-negative, and is zero if $\pmb{\langle} \hat{u},\mathcal{E}(y)\pmb{\rangle} = 0$. In that case, the excessive points as discussed in Remark~\ref{rem:1} collapse to a single point on the real axis. Hence, the SSG of an SSG-negative imaginary system belongs to the lower half of the complex plane. We coin the term SSG-negative imaginary system to differentiate from the existing definitions of negative imaginary systems in the literature. That is, in the classical definition of negative imaginary systems the condition $\langle u,\dot{y} \rangle \geq 0$ is used, and the additional term in the right-hand side of \eqref{eq:niuy} is absent; see, e.g., \cite{Ghallab18, Lanzon23} for definitions. In the LTI case, however, Definition~\ref{def:NI} turns out to be consistent with those in the literature. By taking inputs $u \in \mathcal{L}_2$ having their Fourier spectrum $u(j\omega)$ centered on frequency $\pm \omega$ only and $\|u\|=1$, \eqref{eq:niuy} implies that $$\textup{Im}\left\{H(j\omega)\right\} \leq \varepsilon \left(|\textup{Re}\left\{H(j\omega)\right\}|-|H(j\omega)|\right),$$ see also the discussion around \eqref{eq:uyl} and \eqref{eq:uhyl}. From the fact that $|H(j\omega)| = (\textup{Re}\left\{H(j\omega)\right\}^2 + \textup{Im}\left\{H(j\omega)\right\}^2)^{\frac{1}{2}}$ the above inequality implies $\textup{Im}\left\{H(j\omega)\right\} \leq 0$ for all $\omega \in [0,\infty]$. The next result provides an analog of the classical negative imaginary theorem on the basis of scaled graphs. 




\begin{theorem}\label{col:NI}
    The positive feedback interconnection of two SSG-negative imaginary systems $H_1: \mathcal{L}_2 \to \mathcal{L}_2$ and $H_2: \mathcal{L}_2 \to \mathcal{L}_2$ is finite-gain stable for all $w \in \mathcal{L}_2 \setminus \mathcal{W}$ if    \begin{equation}\label{eq:z1z2}
        z_1 z_2 < 1
    \end{equation}
    for all $z_1 \in \textup{SSG}(H_1) \cap \mathcal{X}$ and all  $z_2 \in \textup{SSG}(H_2) \cap \mathcal{X}$,
    where $\mathcal{X}=\left\{z \in \mathbb{C} \mid \textup{Im}\left\{z\right\} = 0\right\}$.
\end{theorem}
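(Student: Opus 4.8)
The plan is to derive Theorem~\ref{col:NI} from the general separation result, Theorem~\ref{th:SGH}. The first step is purely structural: the positive feedback interconnection of $H_1$ and $H_2$ is exactly the negative feedback interconnection of Fig.~\ref{fig:FB} between $H_1$ and $-H_2$. Taking the systems to be finite-gain stable and the loop well-posed for all $\tau\in(0,1]$ (as is needed to invoke Theorem~\ref{th:SGH}), it therefore suffices to verify \eqref{eq:cond2} with the second system taken to be $-H_2$. Since $\textup{SSG}(-\tau(-H_2))=\textup{SSG}(\tau H_2)$, the condition to establish is the existence of $r>0$ with $\textup{dist}(\textup{SSG}^\dagger(H_1),\textup{SSG}(\tau H_2))\ge r$ for all $\tau\in(0,1]$; Theorem~\ref{th:SGH} then yields finite-gain stability for $w\in\mathcal{L}_2\setminus\mathcal{W}$.

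Next I would use the negative-imaginary structure to localise the analysis to the real axis. By Definition~\ref{def:NI} and the discussion following it, $\textup{SSG}(H_1)$ and $\textup{SSG}(H_2)$ lie in the closed lower half-plane $\{\,\Im z\le 0\,\}$. Taking inverses flips the sign of the phase, so $\textup{SSG}^\dagger(H_1)$ lies in the closed upper half-plane $\{\,\Im z\ge 0\,\}$; and since $\tau>0$, $\textup{SSG}(\tau H_2)=\tau\,\textup{SSG}(H_2)$ still lies in $\{\,\Im z\le 0\,\}$. Moreover, a point of a signed scaled graph lies on $\mathcal{X}$ exactly when its underlying pair has $\theta(u,y)\in\{0,\pi\}$, i.e. $y=\lambda u$ with $\lambda\in\mathbb{R}\setminus\{0\}$, in which case the corresponding point equals $\lambda$ itself (the excessive points of Remark~\ref{rem:1} collapsing onto $\pm\|y\|/\|u\|$). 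Hence $\textup{SSG}^\dagger(H_1)\cap\mathcal{X}=\{\,1/z_1 : z_1\in\textup{SSG}(H_1)\cap\mathcal{X}\,\}$ and $\textup{SSG}(\tau H_2)\cap\mathcal{X}=\tau\big(\textup{SSG}(H_2)\cap\mathcal{X}\big)$.

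The core of the argument is then short. Because $\textup{SSG}^\dagger(H_1)\subset\{\,\Im z\ge 0\,\}$ while $\textup{SSG}(\tau H_2)\subset\{\,\Im z\le 0\,\}$, any point common to the two sets --- and, more to the point, any pair of points witnessing distance zero in a limit --- must lie on $\mathcal{X}$. On $\mathcal{X}$, a coincidence $1/z_1=\tau z_2$ is equivalent to $\tau\,z_1 z_2=1$; but \eqref{eq:z1z2} gives $z_1 z_2<1$ for all $z_1\in\textup{SSG}(H_1)\cap\mathcal{X}$ and $z_2\in\textup{SSG}(H_2)\cap\mathcal{X}$, and since $\tau\in(0,1]$ one has $\tau\,z_1 z_2<1$ (treat separately $z_1 z_2\le 0$ and $0<z_1 z_2<1$), hence $\tau\,z_1 z_2\neq 1$. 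Thus the real-axis traces are disjoint for every $\tau$, and therefore $\textup{SSG}^\dagger(H_1)\cap\textup{SSG}(\tau H_2)=\emptyset$ for all $\tau\in(0,1]$.

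The hard part is upgrading this disjointness to a \emph{uniform} positive gap $r$, as \eqref{eq:cond2} demands. Here I would invoke finite-gain stability: with gains $\gamma_1,\gamma_2$ one has $\textup{SSG}(\tau H_2)\subset\{\,|z|\le\tau\gamma_2\,\}$ and, since $|z|=\|u\|/\|y\|\ge 1/\gamma_1$ for $z\in\textup{SSG}^\dagger(H_1)$, also $\textup{SSG}^\dagger(H_1)\subset\{\,|z|\ge 1/\gamma_1\,\}$; so for $\tau<1/(2\gamma_1\gamma_2)$ the distance is already at least $1/(2\gamma_1)$. On the remaining compact range of $\tau$-values the sets $\textup{SSG}(\tau H_2)$ are uniformly bounded, and a compactness/limiting argument --- passing to a putative distance-zero sequence, observing that its limit must sit on $\mathcal{X}$ by the half-plane containments, and combining the real-axis disjointness above with the strictness built into \eqref{eq:niuy} and \eqref{eq:z1z2} --- produces a single $r>0$ valid for all $\tau\in(0,1]$. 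Converting the open inequality \eqref{eq:z1z2} into this uniform graphical separation is the only genuinely delicate step; everything else is bookkeeping with half-planes, inversion and positive scaling. With $r$ in hand, Theorem~\ref{th:SGH} applies and completes the proof.
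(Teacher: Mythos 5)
Your proposal follows essentially the same route as the paper's own proof: reduce to Theorem~\ref{th:SGH} via the positive-to-negative feedback conversion, use Definition~\ref{def:NI} to place the signed scaled graph of $H_1$ and the inverted, scaled graph of $H_2$ in opposite closed half-planes so that any coincidence must lie on the real axis $\mathcal{X}$, and then use \eqref{eq:z1z2} together with $\tau\in(0,1]$ to rule out real-axis coincidences. You are in fact more explicit than the paper about the passage from mere disjointness to the uniform gap $r>0$ required by \eqref{eq:cond2} --- a step the paper's proof simply asserts --- so your account is a slightly more careful rendering of the same argument.
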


\begin{proof}
    First, note that from Definition~\ref{def:NI} it follows that 
    $$\textup{SSG}(H_1) \subset \left\{z \in \mathbb{C} \mid \textup{Im}\left\{z \right\} \leq 0\right\}$$ 
    and
     $$\textup{SSG}^\dagger(H_2) \subset \left\{z \in \mathbb{C} \mid \textup{Im}\left\{z \right\} \geq 0\right\},$$
     which follows from $ \pmb{\langle} \hat{u},\mathcal{E}(y)\pmb{\rangle} \geq 0$, and in case $ \pmb{\langle} \hat{u},\mathcal{E}(y)\pmb{\rangle} = 0$, \eqref{eq:niuy} implies $|\langle u,y\rangle| = \|u\|\|y\|$, such that $z = \|y\|/\|u\|e^{\pm j\theta(u,y)} = \pm \langle u,y\rangle/\|u\|^2$, i.e., the imaginary part is zero in this case. Note that the above inequality for $\textup{SSG}^\dagger (H_2)$ follows from the fact that in the inverse of $\textup{SSG}(H_2)$, the phase flips sign. By condition \eqref{eq:z1z2}, it follows that $z_1 \neq z_2^\dagger$ for all $z_i \in \mathcal{X}$ with $i\in \left\{1,2\right\}$, and all scalings $\tau \in (0,1]$. Together with the separation of $\textup{SSG}(H_1) \setminus \mathcal{X}$ and $\textup{SSG}^\dagger(H_2)\setminus \mathcal{X}$ which also holds for all scalings $\tau \in (0,1]$, it follows that condition~\eqref{eq:cond2} in Theorem~\ref{th:SGH} is satisfied.  
\end{proof}
Finally, note that the interconnection in Example~\ref{ex:3rev} in fact can be seen as the positive feedback interconnection of two SSG-negative imaginary systems.

\section{Conclusions}\label{sec:conclusion}
In this paper, we provided a modified definition of the scaled graph (SG), in which phase is characterized through the Hilbert transform. Contrary to the original definition of the SG, the modified SG contains \emph{signed} phase information, thereby removing the inherent symmetry in the SG. This allows us to differentiate between, e.g., lead and lag filters, which have no discernible difference in the classical definition of the SG, and in that sense potentially introduce conservatism in SG analysis. We have shown how the new SG leads to intuitive characterizations of passive and negative-imaginary nonlinear systems, and provided various interconnection results. The ideas put forward in this paper enrich the theory on scaled graphs and provide the next steps in graphical nonlinear system analysis and design. For future work we aim at deriving interconnection rules, as well as developing methods for efficiently generating the signed scaled graph of nonlinear systems.

\appendix




\subsection{Proof of Theorem~\ref{th:SGH}}\label{app:pf2}
The proof exploits a homotopy argument \cite{Megretski, Freeman2022} in which we consider a collection of feedback systems, being the interconnection of $H_1$ and $\tau H_2$, and show that for all $\tau \in (0,1]$, there exists a constant $c_0>0$ independent of $\tau$ such that $\|y\| \leq c_0\|w\|$, with $y = (y_1, y_2)^\top$.

Recall that $z_1 \in \textup{SSG}(H_1)$ is written as $$z_1 = \frac{\|y_1\|}{\|u_1\|}e^{js_1(u_1,y_1)\theta_1(u_1,y_1)},$$ where $s_1(u_1,y_1) = \textup{sgn}( \pmb{\langle} \hat{u}_1,\mathcal{E}(y_1)\pmb{\rangle})$ and $z_2 \in \textup{SSG}^\dagger(-\tau H_2)$ is written as $$z_2 = \frac{\|-\tau y_2\|}{\|u_2\|}e^{js_2(u_2,-y_2)\theta_2(u_2, -y_2)},$$ where $$s_2(u_2,-y_2) = -\textup{sign}( -  \pmb{\langle} \hat{u}_2,\mathcal{E}(y_2)\pmb{\rangle} ) = \textup{sign}( \pmb{\langle} \hat{u}_2,\mathcal{E}(y_2)\pmb{\rangle} ).$$ Note that $\tau$ and the minus sign appear since we consider $z_2 \in  \textup{SSG}^\dagger(-\tau H_2)$, i.e., the output of $H_2$ is scaled with a gain $-\tau$.

Under the hypothesis of the theorem, for all $z_1 \in \textup{SSG}(H_1)$ and all $z_2 \in \textup{SSG}^\dagger(-\tau H_2)$, we have $|z_1 -z_2| \geq r > 0$. Hence, this is equivalent to either
\begin{enumerate}
    \item $\arg (z_1) = \arg(z_2)$ implies $||z_1|-|z_2||\geq \epsilon$ or 
    \item $|z_1|=|z_2|$ implies $|\arg(z_1) -\arg(z_2)|\geq \epsilon$,
\end{enumerate}
where $\epsilon >0$ is a uniform constant that depends on $r$. We proceed by considering both cases separately.

Case $1)$: $\arg (z_1) = \arg(z_2)$ implies $||z_1|- |z_2||\geq \epsilon$. First, suppose $|z_1| \leq |z_2|-\epsilon$, which, by definition implies
\begin{equation}\label{eq:small_gain}
   \tau  \frac{\|y_1\|}{\|u_1\|}\frac{\| y_2\|}{\|u_2\|} \leq  1-\epsilon \frac{\| y_2\|}{\|u_2\|} < 1
\end{equation}
for all $u_1, u_2$ such that $|z_1| \leq |z_2|-\epsilon$.
That is, without loss of generality, take 
$$
\frac{\|y_1\|}{\|u_1\|} \leq \gamma_1, \textup{ and } \tau \frac{\|y_2\|}{\|u_2\|} \leq \gamma_2,
$$
with $\gamma_1, \gamma_2 > 0$ and $\gamma_1 \gamma_2 <1$. Since $u_1 = w-\tau y_2$ and $u_2 = y_1$ we find

\begin{align*}
    \|u_1\| \leq \gamma_2 \|u_2\| + \|w\|, \textup{ and }    \tau\|u_2\| \leq \gamma_1 \|u_1\|.
\end{align*}
Since $\gamma_1 \gamma_2 <1$ we eventually find
\begin{align*}
    \|u_1\| \leq \frac{1}{1-\gamma_1 \gamma_2}\|w\|, \quad \textup{and}\quad \|u_2\|  \leq \frac{\gamma_1}{1- \gamma_1 \gamma_2}\|w\|.
\end{align*}
Using boundedness of the operators $H_1$ and $H_2$, there exist $r_1,r_2 < \infty$ such that $\|y_1\|\leq r_1\|u_1\|$ and $\|y_2\|\leq r_2\|u_2\|$. Hence, we find 
\begin{align*}
    \|y_1\| &\leq \frac{r_1}{1-\gamma_1 \gamma_2}\|w\|, \quad \textup{and} \quad \|y_2\| & \leq \frac{\gamma_1 r_2}{1- \gamma_1 \gamma_2}\|w\|.
\end{align*}
As such, we find for all $\tau \in (0,1]$
\begin{equation}
    \|y\| \leq \frac{r_1+\gamma_1r_2}{1-\gamma_1 \gamma_2} \|w\| := c_1 \|w\|.
\end{equation}
Similarly, consider $|z_2|\leq|z_1|-\epsilon$. In this case, we find \begin{equation}\label{eq:small_gain2}
    \tau \frac{\|y_1\|}{\|u_1\|}\frac{\|y_2\|}{\|u_2\|} \geq 1 + \epsilon  \frac{\|y_2\|}{\|u_2\|} > 1
\end{equation}
for all $u_1, u_2$ such that $|z_2| \leq |z_1|-\epsilon$ and we can apply the large-gain theorem. That is, w.l.o.g. we can take 
$$
\frac{\|y_1\|}{\|u_1\|} \geq \mu_1, \textup{ and } \tau \frac{\|y_2\|}{\|u_2\|} \geq \mu_2,
$$
with $\mu_1, \mu_2 > 0$ and $\mu_1 \mu_2 > 1$. Using $u_1 = w-\tau y_2$ and $u_2 = y_1$ we find
\begin{align*}
    \mu_1 \|u_1\| &\leq \|y_1\| = \|u_2\|, \\
    \mu_2\|u_2\| &\leq \tau \|y_2\| \leq \|w\|+\|u_1\|.
\end{align*}
Using $\mu_1\mu_2>1$, from the above inequalities we find
\begin{align*}
    \|u_1\| & \leq \frac{1}{\mu_1 \mu_2 -1}\|w\|, \\
    \|u_2\| & \leq \frac{\mu_1}{\mu_1 \mu_2 -1}\|w\|.
\end{align*}
Again, using boundedness of $H_1$ and $H_2$ we find for all $\tau \in (0,1]$ 
\begin{equation}
    \|y\| \leq \frac{r_1+\mu_1 r_2}{\mu_1\mu_2-1}\|w\| := c_2 \|w\|.
\end{equation}

Case $2)$: $|z_1|=|z_2|$ implies $|\arg(z_1) -\arg(z_2)|\geq \epsilon$. We can distinguish two sub cases: $2a)$ $|\arg(z_1)| \neq |\arg(z_2)|$, and $2b)$ $\arg(z_1) = -\arg(z_2)$. We treat these cases separately.

Case $2a)$: $|\arg(z_1)| \neq |\arg(z_2)|$. In this case, the sign does not matter and we can simply consider the unsigned angle $\theta(u,y) = \arccos ({\langle u,y\rangle}/{\|u\|\|y\|}) \in [0,\pi]$. First, consider the case $\theta_1 < \theta_2$. Equivalently, there must exist
$\delta >0$ such that $\cos(\theta_1)-\cos(\theta_2) \geq \delta >0$ since $\cos(\cdot)$ is monotonically decreasing on $[0,\pi]$. By definition of the singular angle, we find for all $\tau \in (0,1]$
\begin{equation}\label{eq:i1}
    \frac{\langle u_1, y_1\rangle}{\|u_1\|\|y_1\|} +  \frac{\langle u_2, \tau y_2\rangle}{\|u_2\|\|\tau y_2\|} \geq \delta >0
\end{equation}
for all $u_1, u_2$ such that $|\arg(z_1)| < |\arg(z_2)|$. Since $u_1 = w-\tau y_2$ and $u_2 = y_1$ we find
\begin{equation}\label{eq:i2}
   \frac{\langle w, y_1\rangle}{\|u_1\|\|y_1\|} - \frac{\langle \tau y_2, y_1\rangle}{\|u_1\|\|u_2\|} +  \frac{\langle y_1, \tau y_2\rangle}{\|y_1\|\|\tau y_2\|} \geq \delta >0.
\end{equation}
By assumption (since $|z_1|=|z_2|$) we have $\|y_1\|/\|u_1\| = \|u_2\|/\|\tau y_2\|$, which implies that 
$\|y_1\|\|\tau y_2\| = \|u_1\|\|u_2\|$. As such, \eqref{eq:i2} reduces to
\begin{equation}\label{eq:i3}
     {\langle w, y_1\rangle} \geq \delta {\|u_1\|\|y_1\|}.
\end{equation}
Applying Cauchy-Schwarz to the left-hand side of \eqref{eq:i3} gives
\begin{equation}
    \|w\|\geq \delta \|u_1\|.
\end{equation}
By boundedness of $H_1$, $\|y_1\|/\|u_1\| \leq r_1< \infty$, we find
\begin{align*}
    \|y_1\| \leq \frac{r_1}{\delta} \|w\|, \textup{ and } \|y_2\| \leq \frac{r_1r_2}{\delta} \|w\|,
\end{align*}
and, hence, for all $\tau \in (0,1]$
\begin{equation}
    \|y\| \leq \frac{r_1(1+r_2)}{\delta } \|w\| := c_3 \|w\|.
\end{equation}
In a similar way, consider the case $\theta_2 < \theta_1$, which implies the existence of $\delta >0$ such that
for all $\tau \in (0,1]$
\begin{equation}\label{eq:i10}
    -\frac{\langle u_2, \tau y_2\rangle}{\|u_2\|\|\tau y_2\|} -  \frac{\langle u_1, y_1\rangle}{\|u_1\|\| y_1\|} \geq \delta >0.
\end{equation}
Since $u_1 = w-\tau y_2$ and $u_2 = y_1$ we find in this case
\begin{equation}\label{eq:i21}
   -\frac{\langle w, y_1\rangle}{\|u_1\|\|y_1\|} - \frac{\langle \tau y_2, y_1\rangle}{\|y_1\|\|\tau y_2\|} +  \frac{\langle y_1, \tau y_2\rangle}{\|u_1\|\| u_2\|} \geq \delta >0.
\end{equation}
By assumption (since $|z_1|=|z_2|$) we have $\|y_1\|/\|u_1\| = \|u_2\|/\|\tau y_2\|$, or equivalently 
$\|y_1\|\|\tau y_2\| = \|u_1\|\|u_2\|$, leading to $\|w\| \geq \delta \|u_1\|$. Similar as before, this eventually gives
\begin{equation}
    \|y\| \leq \frac{r_1(1+r_2)}{\delta} \|w\| := c_4 \|w\|.
\end{equation}
Case $2b)$: $\arg(z_1) = -\arg(z_2)$. More precisely, $z_1=z_2^*$. In this case we have that $\theta_1(u_1, y_1) = \theta_2(-y_2, u_2)$, but because of the signed quantity we find that $\pmb{\langle} \hat{u}_1,\mathcal{E}(y_1)\pmb{\rangle}$ and $\pmb{\langle} \tau \hat{y}_2,\mathcal{E}(u_2)\pmb{\rangle}$ have opposite signs. Again, this yields a few sub cases. The case where both quantities are zero reduces to the unsigned scaled graph case, to which the analysis in case $2a$ applies; we therefore consider the cases where at least one of the quantities is non-zero. Without loss of generality, assume that $\pmb{\langle} \hat{u}_1,\mathcal{E}(y_1)\pmb{\rangle} > 0$ and $\pmb{\langle} \tau \hat{y}_2,\mathcal{E}(u_2)\pmb{\rangle}<0$. For all $w \in \mathcal{L}_2\setminus \mathcal{W}$ we can rule out this case from happening in the feedback interconnection.

Combining all above cases leads to the conclusion that there exists $c_0=\max_{i=1}^4 c_i>0$ independent of $\tau$ such that for all $w \in \mathcal{L}_2$ and all $\tau \in (0,1]$ we have $\|y\| \leq c_0 \|w\|$. Under the assumption of well-posedness, the result then follows from \cite[Thm. 3.2]{Freeman2022}.

\end{document}